\newcommand{\remove}[1]{}
\newtheorem{lemma}{Lemma}
\newtheorem{observation}{Observation}
\title{Greedy is good: \\An experimental study on minimum clique cover and
maximum independent set problems for randomly generated rectangles}
\author{Ritankar Mandal\footnote{Indian Statistical Institute, Kolkata 700108,
India} \and Anirban Ghosh\footnote{Dept. of CS, University of
Wisconsin-Milwaukee, Milwaukee, WI 53201, USA} \and Sasanka
Roy\footnote{Chennai Mathematical Institute, Chennai,
India} \and Subhas C. Nandy$^*$}
\date{}
\begin{document}
\thispagestyle{empty}
\maketitle
\sloppy

\begin{abstract}
Given a set ${\cal R}=\{R_1,R_2, \ldots, R_n\}$ of $n$ randomly positioned 
axis parallel rectangles in 2D, the problem of computing the minimum clique
cover (MCC) and maximum independent set (MIS) for the intersection graph
$G({\cal R})$ of the members in $\cal R$ are both computationally hard
\cite{CC05}. For the MCC problem, it is proved that polynomial time constant
factor approximation is impossible to obtain \cite{PT11}. Though such a
result is not proved yet for the MIS problem, no polynomial time constant
factor approximation algorithm exists in the literature. We study the
performance of greedy algorithms for computing these two parameters of $G({\cal
R})$. Experimental results shows that for each of the MCC and MIS problems, the
corresponding greedy algorithm produces a solution that is very close to its
optimum solution. Scheinerman \cite{Scheinerman80} showed that
the size of MIS is tightly bounded by $\sqrt{n}$ for a random instance of the 1D
version of the problem, (i.e., for the interval graph). Our experiment shows
that the size of independent set and the clique cover produced by the greedy
algorithm is at least $2\sqrt{n}$ and at most $3\sqrt{n}$, respectively. Thus
the experimentally obtained approximation ratio of the greedy algorithm for MIS
problem is at most $\frac{3}{2}$ and the same for the MCC problem is at least
$\frac{2}{3}$. Finally we will provide refined greedy algorithms based on a
concept of {\it simplicial rectangle}. The characteristics of this algorithm may
be of interest in getting a provably constant factor approximation algorithm for
random instance of both the problems. We believe that the result also
holds true for any finite dimension.  
\end{abstract}

{\bf Keywords:} Minimum clique cover, maximum independent set, rectangle
intersection graph, approximation algorithm, empirical study
\section{Introduction}
Let $G({\cal R})$ be the intersection graph of a set ${\cal R}=\{R_1,R_2,\ldots,
R_n\}$ of $n$ randomly placed polygonal objects, e.g., rectangles, circles,
polygons, etc., in a 2D region. In the practical applications two types of
cliques in the geometric intersection graph are considered, namely (i) graphical
clique and (ii) geometric clique. A graphical clique $C$ in $G({\cal R})$ is a
maximal complete subgraph of $G({\cal R})$. A geometric clique $C'$ consists of
a maximal set of objects $C' \subseteq {\cal R}$ such that they have a common
point in their interior. Thus, a geometric clique is always a graphical clique,
however the converse is not true. If $\cal R$ is a set of polygonal objects,
then the problem of computing the minimum geometric clique cover is NP-hard
\cite{FowlerPT81}. However, for a set $\cal R$ of axis-parallel rectangles, a
graphical clique is always a geometric clique since the axis-parallel rectangles
satisfy Helly property. Thus, the minimum clique cover of the graph $G({\cal
R})$ is same as the minimum number of points required to stab all the rectangles
in $\hat{\cal R}$. It is easy to show that the number of cliques present in
$G({\cal R})$ may be $O(n^2)$ in the worst case \cite{NB}. But, if $\cal R$
is a set of $c$-oriented polygons ($c \geq 5$), the Helly property does not
hold. Nilson \cite{NX} proved that the number of geometric clique in $G({\cal
R})$ can be at most $\tau(2,c)\phi({\cal R})\log_2^{c-1}(\phi({\cal R})+1)$,
where $\tau(2,c)$ is the Gallai number of the pairwise intersecting $c$-oriented
polygons and Let $\phi({\cal R})$ denotes the packing number of $\cal R$, that
is the maximum number of pairwise disjoint objects in $\cal R$. The same paper
also provides a $t(n,c)+O(nc\log(\phi(\cal R)))$ time algorithm for computing
the minimum geometric clique cover of $G({\cal R})$, where $t(n,c)$ is the time
required to pierce pairwise intersecting $c$-oriented polygons.  

In this note, we are interested in studying the performance of the greedy
algorithm for computing the minimum clique cover (MCC) and maximum independent
set (MIS) of a set $G({\cal R})$ of axis-parallel rectangles in 2D. The study on
MCC and MIS problems on a random instance of interval graph have started long
back by Scheinerman \cite{Scheinerman80}. They showed a tight bound of $\sqrt{n}$ 
for both MCC and MIS problems on randomly generated intervals. 
Recently, Tran \cite{Tran11} formally proved that size of the
MCC for $G({\cal R})$ can be at most $O(\sqrt{n} \log \log n)$. He conjectured
that size of the MCC is at most $O(\sqrt{n})$ for $G({\cal R})$. He also
provided results for higher dimension of the problem. A lot of studies on MCC
and MIS problems have been done for the intersection graph of a set of
axis-parallel rectangles on a 2D plane. Aronov et al. \cite{AES09} proposed a
randomized polynomial time algorithm for the MCC problem using the concept of
$\epsilon$-net that can produce $O(\log \log n)$ factor approximation result.
Later Pach et al. \cite{PT11} showed that this is the best possible
approximation for the MCC problem that can be achieved in polynomial time. 
For the MIS problem, the first approximation algorithm for arbitrary
sized axis-parallel rectangles was proposed by Agarwal  et al. \cite{ag} that
produces a $O(\log n)$-factor approximation result in $O(n\log n)$ time. The 
best known result for MIS problem is due to Chalermsook and Chuzhoy 
\cite{ChalermsookC09}, which provides a polynomial time $O(\log \log n)$ factor 
approximation algorithm for the said problem. A nice literature on MIS problem 
can be found in \cite{ChalermsookC09}. Although there is a lower-bound proof on
approximation ratio of MCC problem, same for MIS problem is not known. Nielson
\cite{Neilson00} gives a construction which gives $\Omega(\log n)$ bound for the
greedy algorithm posed in \cite{Chv79} (see Figure \ref{fig}).

Halld{\'o}rsson and Radhakrishnan \cite{HJ94} showed that for general graphs of
bounded degree $\Delta$, the greedy algorithm produces
$\frac{\Delta+1}{3}$-factor approximation result
for the MIS problem. It also proposes a simple parallel algorithm that runs in
$O(\log^*n)$ time using linear number of processors. The approximation factor
can be improved to $\frac{2\overline{d}+3}{5}$ using the fractional relaxation
technique of Nemhauser and Trotter \cite{NT75}, where $\overline{d}$ is the
average degree of the graph. Finally, it shows that using the greedy
strategy of removing all cliques of same size gradually improves the
approximation ratio of the algorithm, and $\frac{\Delta}{3.76}$ approximation
factor is possible to achieve.  Mestre \cite{Mestre06} introduced the notion of
$k$-extendible systems and showed
that for such a system the greedy algorithm produces a $\frac{1}{k}$-factor
approximation result. They showed that maximum weight $b$-matching forms a
2-extendible system, and hence greedy algorithm produces a $\frac{1}{2}$ factor
approximation result. Several other problems, namely maximum profit scheduling,
maximum asymmetric TSP, can be shown to satisfy the properties of $k$-extendible
system for some suitable $k$.

\begin{figure}[t]
\centering
\includegraphics[scale=0.3]{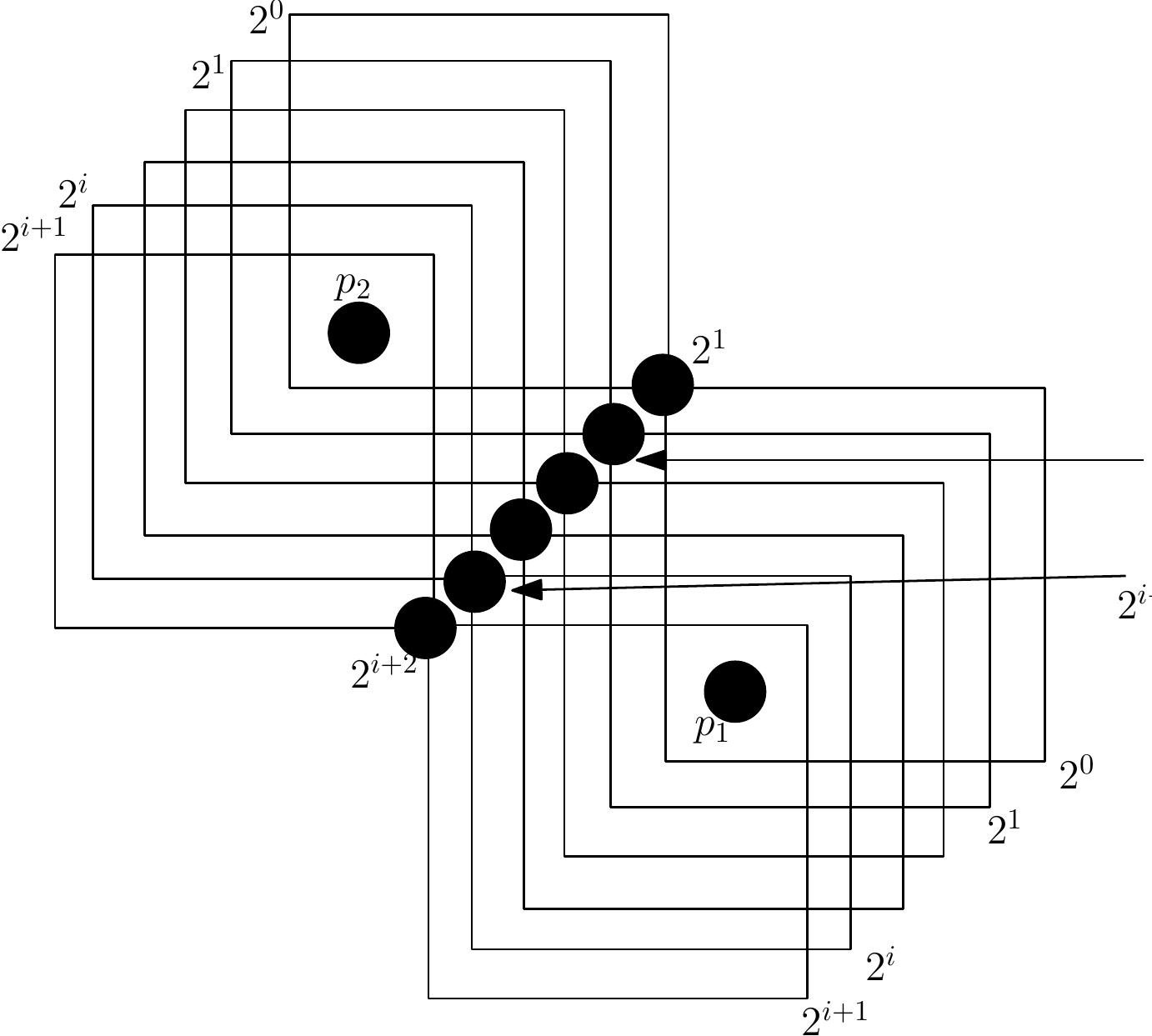} 
\caption{An example showing that  approximation factor of the greedy
algorithm is $\log n$}
\label{fig}
\vspace{-0.1in}
\end{figure}

\vspace{-0.1in}
\section{Our contribution}
\vspace{-0.1in}
The main contribution of this paper is an empirical study showing that the
greedy algorithms for each of the MCC and MIS problems produces very close 
result when compared with optimum solution for a random instance of the
intersection graph  of axis-parallel rectangles. Our experiment pushes the
conjecture made by Tran \cite{Tran11} more towards affirmative side because the
size of the clique cover produced by the greedy algorithm for the MCC problem on
a random instance of rectangle intersection graph is at most $3 \sqrt{n}$ for
reasonably big random instances and we believe that it will hold true
for any such instances. Similarly, the greedy algorithm for the MIS problem also
produces a very close solution to the optimum solution in the sense that the
size of MIS produced by greedy algorithm is at least $2 \times \sqrt{n}$.
Finally we will provide two refined greedy algorithms and their characteristics
that may be of interest to design constant factor approximation algorithms for
the MCC and MIS problems where the participating rectangles are randomly placed
and of random size. We strongly believe that the result also holds for the 
intersection graph of randomly generated axis-parallel rectangles in any finite
dimension. We have also produced the result of Nielson's \cite{Neilson00} divide
and conquer algorithm for the MCC problem. Nielson \cite{Neilson00} has done
some experimental study of his algorithm on the intersection graph of randomly
generated rectangles and claimed that the approximation ratio of their algorithm
is $3.42$. We show that the size of the clique cover produce by our greedy
algorithm is almost 2 times better than that of \cite{Neilson00}.

\section{Algorithms} \label{alg}
We generate a set ${\cal R}=\{R_1,R_2,\ldots,R_n\}$ of axis-parallel rectangles
in a given rectangular region $([a,b],[c,d])$. Each rectangle is stored in the
form of a pair of points indicating its bottom-left and top-right corners. In
other words, for the rectangle $R_i$, we randomly choose a pair of points
$p_i=(x_p,y_p)$, $q_i=(x_q,y_q)$ on the given rectangular region
$([a,b],[c,d])$. Without loss of generality assume that $x_p <x_q$. Now, if $y_p
< y_q$, then $R_i=[(x_p,y_p),(x_q,y_q)]$; otherwise $R_i=[(x_p,y_q),(x_q,y_p)]$.
Now, we introduce the concept of {\it dominated rectangle}. If a rectangle $R_i$
contains some other rectangle $R_j$, $j\neq i$, then it is called a dominated
rectangle. A dominated rectangle can be disregarded since it need not be
considered while computing the minimum clique cover, and also it does not
participate in the maximum independent set. After the generation of $n$ random
rectangles, let $\hat{\cal R}$ denote the set of all non-dominated rectangles.

We use $G(\hat{\cal R})$ to denote the intersection graph of the members in 
$\hat{\cal R}$. Here the nodes correspond to the members in $\hat{\cal R}$.
Between a pair of vertices there is an edge if the corresponding two rectangles
share a common point in their interior. We use $N(R_i)$ to denote the vertices
adjacent to $R_i$ in $G(\hat{\cal R})$ including itself. We now describe the two
greedy heuristic algorithms for the MCC and MIS problems. We have also
implemented the divide and conquer algorithm of Nielson \cite{Neilson00} for the
MCC problem to justify the approximation bound of our algorithms on the MCC
problem for a random instant of rectangle intersection graph. We use this
approximation bound to justify the approximation bound of our greedy algorithm
for the MIS problem for a random instant of rectangle intersection graph. 

\subsection{MCC problem}  \label{GCC}
Since the axis-parallel rectangles satisfy Helly property, the members of a
clique in $G({\cal R})$ share a point in their interior. Thus, the minimum
clique cover of the graph $G({\cal R})$ is same as the minimum number of points
required to stab all the rectangles in $\hat{\cal R}$. 
\subsubsection{Greedy algorithm}
Our greedy algorithm proceeds as follows. We perform a horizontal line sweep
from top to bottom to compute the largest clique ${\cal C}$ of $G({\cal R})$. It
is a point where maximum number of rectangles in $\hat{\cal R}$ overlap. All
these rectangles can be stabbed by a single point. We delete all the rectangles
in ${\cal C}$ from $\hat{\cal R}$ and repeat the same process. The iteration
continues until all the rectangles in $\hat{\cal R}$ are deleted. The pseudo
code of the algorithm is given below. 

\begin{algorithm}[t!]
\caption{Algorithm GCC$({\cal R})$}
\small
\begin{algorithmic}[1]
\STATE {\bf Input:} A set $\cal R$ of randomly generated $n$ rectangles on a
2-dimensional plane. 
\STATE {\bf Output:} A set $GCC$ of points that stab all the rectangles in 
$\cal R$.
\STATE Compute $\hat{{\cal R}}$, the non-dominated set of rectangles in $\cal
R$.
\STATE Initialize $GCC =\emptyset$.
\REPEAT
\STATE Compute a maximum clique ${\cal C}$ among the rectangles in
$\hat{\cal R}$ using the algorithm of \cite{IA83,NB}, and choose a point $\pi$
in the common region of the members in $\cal C$.  
\STATE Set $\hat{\cal R}= \hat{\cal R} \setminus {\cal C}$ and $GCC= GCC \cup
\{\pi\}$.
\UNTIL $\hat{\cal R} =\emptyset$.
\STATE Return $GCC$ as the stabbing points along with its cardinality.
\end{algorithmic}
\label{alg:GCC}
\normalsize
\end{algorithm}
\vspace{-0.1in}
\begin{lemma}
The worst case time complexity of the algorithm $GCC$ is $O(n^2\log n)$.  
\end{lemma}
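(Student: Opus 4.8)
The plan is the standard three-part accounting: bound the cost of the preprocessing in line~3, bound the number of iterations of the \textbf{repeat} loop, bound the cost of a single iteration, and multiply. For the preprocessing, deciding for a fixed ordered pair $(R_i,R_j)$ whether $R_i$ contains $R_j$ takes $O(1)$ time, so a brute-force scan over all pairs computes the non-dominated set $\hat{\cal R}$ in $O(n^2)$ time. (A sweep-line approach would give $O(n\log n)$, but since $O(n^2)$ is already dominated by the claimed bound, no cleverness is needed here.)

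Next I would bound the number of iterations. Each pass through lines~5--8 deletes from $\hat{\cal R}$ the rectangles of a maximum clique $\cal C$. As long as $\hat{\cal R}\neq\emptyset$, any single rectangle by itself forms a clique, so the computed maximum clique is non-empty and at least one rectangle is removed per iteration; the loop condition in line~8 is therefore reached after at most $|\hat{\cal R}|\le n$ iterations, and in particular $|GCC|\le n$.

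For the cost of one iteration, the only non-trivial work is line~6. Because axis-parallel rectangles satisfy the Helly property (as noted in Section~\ref{GCC}), a maximum clique of the current intersection graph corresponds to a point of maximum overlap depth, which the horizontal top-to-bottom sweep of Imai--Asano \cite{IA83} (see also \cite{NB}) computes in $O(m\log m)$ time on $m$ rectangles; picking a witness point $\pi$ in the common region and performing the set operations in lines~6--7 costs only $O(m)$. Since $m\le n$ throughout, each iteration runs in $O(n\log n)$. Combining, we get $O(n^2)$ for preprocessing plus at most $n$ iterations at $O(n\log n)$ each, i.e.\ $O(n^2\log n)$ in total.

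The one place that genuinely needs care --- the ``main obstacle,'' such as it is --- is justifying that a single invocation of the maximum-clique subroutine runs in $O(n\log n)$ rather than, say, $O(n^2)$; this is exactly where the Helly property is used, reducing max-clique to max-depth, together with the correctness of the sweep data structure of \cite{IA83,NB}. Everything else is a direct counting argument, and no attempt is made to reuse work between iterations (which could only improve the bound; in fact if the returned cover has size $k=|GCC|$ the running time is $O(nk\log n)$, but $k=\Theta(n)$ in the worst case, yielding $O(n^2\log n)$).
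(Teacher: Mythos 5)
Your proposal is correct and follows essentially the same route as the paper's proof, which simply cites the $O(n\log n)$ maximum-clique algorithm of \cite{IA83,NB} and the $O(n)$ bound on the number of iterations; your additional accounting for the $O(n^2)$ preprocessing and the per-iteration set operations only fleshes out details the paper leaves implicit.
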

\begin{proof}
Follows from the fact that  the largest clique of a
rectangle intersection graph can be computed in $O(n\log n)$ time
\cite{IA83,NB}, and the number of iterations can be $O(n)$ in the worst case. 
\end{proof}

\subsubsection{An improved greedy algorithm}
We now introduce the concept of {\it simplicial rectangle} to present an
improvement of the greedy algorithm $GCC\_I$ for the MCC problem. A node in the
graph $G({\cal R})$ is said to be simplicial if all the nodes adjacent to it
form a clique. The corresponding rectangle will be referred to as {\it
simplicial rectangle}. The concept of simplicial rectangle is very much similar
to a simplicial vertex in a graph \cite{Golumbic}. It needs to be noted that all
the rectangles adjacent to a simplicial rectangle can be stabbed by a point
along with the rectangle $R$. Algorithm \ref{alg:GCCI} states the detailed
procedure. 
 
\begin{algorithm}[t!]
\caption{Algorithm GCC\_I$({\cal R})$}
\small
\begin{algorithmic}[1]
\STATE {\bf Input:} A set $\cal R$ of $n$ randomly generated rectangles on a
2D plane.
\STATE {\bf Output:} A set $GCC\_I$ of points that stab all the rectangles in 
$\cal R$.
\STATE Compute the non-dominated set of rectangles $\hat{\cal R}$.
\STATE Initialize $\Theta =\emptyset$; $\Phi=\emptyset$.
\REPEAT
\STATE $R$ = Find\_Simplicial($\hat{\cal R}$).
\IF{$R \neq 0$}
\STATE Let $N(R)$ be the rectangles adjacent to $R$ including itself. 
\STATE Choose a point $\pi$ in the region common to all the members in
$N(R)$. 
\STATE Set $\hat{\cal R}= \hat{\cal R} \setminus N(R)$ and $\Theta = \Theta
\cup \{\pi\}$.
\ELSE
\STATE Compute a maximum clique ${\cal C}$ among the rectangles in
$\hat{\cal R}$ using the algorithm of \cite{IA83,NB}, and choose a point $\pi$
in the common region of the members in $\cal C$.  
\STATE Set $\hat{\cal R}= \hat{\cal R} \setminus {\cal C}$ and $\Phi = \Phi
\cup \{\pi\}$.
\ENDIF
\UNTIL {$\hat{\cal R} =\emptyset$}
\STATE {\bf return} $GCC\_I=\Theta \cup \Phi$ as the stabbing points along with
its 
cardinality.
\end{algorithmic}
\vspace{0.2in}

\begin{algorithmic}[1]
\STATE {\bf Procedure Find\_Simplicial}($\mathbf{A}$)
\STATE {\bf Assumption:} All the vertices in $A$ are unmarked. 
\STATE Compute an incidence matrix $I$ of the graph $G(A)$; 
\STATE Sort the vertices of $G(A)$ in increasing order of their degrees;
\WHILE{all the vertices in $A$ are not marked}
\STATE choose the vertex $v$ having minimum degree among the unmarked vertices. \\
Let $N(v)$ be the set of vertices adjacent to $v$. $N(v)$ includes the vertex
$v$. 
\IF{$N(v)$ forms a clique}
\STATE {\bf return} $R$ as a simplicial rectangle, and a point $\pi$ 
inside the common intersection region;
\ELSE 
\STATE Mark the members of $N(v)$ since they can not be simplicial.
\IF{$a,b \in N(v)$ and $I[a,b] = 0$} 
\STATE mark all the vertices $u$ such that $I[a,u]=1$ and $I[b,u]=1$
\ENDIF
\ENDIF
\ENDWHILE
\STATE {\bf return} 0.
\end{algorithmic}
\label{alg:GCCI}
\normalsize
\end{algorithm}
\vspace{-0.1in}
\begin{lemma} \label{l2}
The worst case time complexity of the algorithm GCC\_I is $O(n^3)$.
\end{lemma}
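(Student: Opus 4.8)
The plan is to bound, exactly as for Algorithm \textsf{GCC}, the number of iterations of the outer \textsf{REPEAT} loop and the cost of one iteration. Each pass of the loop deletes from $\hat{\cal R}$ either the non-empty closed neighbourhood $N(R)$ of a simplicial rectangle or a non-empty maximum clique ${\cal C}$, so the loop runs at most $n$ times; it therefore suffices to show that one iteration takes $O(n^2)$ time, the bottleneck being a single call to \textsf{Find\_Simplicial} (the maximum-clique computation in the \textsf{else} branch costs only $O(n\log n)$ by \cite{IA83,NB}).

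Inside \textsf{Find\_Simplicial} I would dispatch the cheap parts first: constructing the incidence matrix $I$ of $G(A)$ costs $O(|A|^2)=O(n^2)$, and sorting the vertices by degree costs $O(n\log n)$. The remaining work is the \textsf{while} loop, and here I would first isolate a structural fact: the vertices ever chosen by the loop — the successive minimum-degree unmarked vertices $v_1,v_2,\dots$ — form an independent set of $G(A)$, because choosing a non-simplicial $v$ marks all of $N(v)$ (so no later choice can be adjacent to $v$), while choosing a simplicial $v$ terminates the loop. I would also record that when a non-edge $(a,b)\subseteq N(v)$ is detected, marking every common neighbour of $a$ and $b$ costs $O(n)$ and is sound, since each such vertex has the non-edge $(a,b)$ inside its neighbourhood and hence cannot be simplicial.

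With these facts in hand, a clique test on $N(v)$ using $I$ costs $O(\deg(v)^2)$ (aborting at the first non-edge), and the per-choice marking costs $O(n)$. Summing over the chosen vertices $v_1,\dots,v_k$: they are pairwise non-adjacent and are taken in non-decreasing degree order, and choosing $v_j$ marks its whole neighbourhood and, when $\deg(v_j)$ is large, additionally eliminates every remaining vertex whose neighbourhood contains the detected non-edge; I would use this to argue $\sum_j\deg(v_j)=O(n)$, whence $\sum_j\deg(v_j)^2=O(n^2)$ and the total marking is $O(n^2)$. Consequently one call to \textsf{Find\_Simplicial}, and hence one iteration of \textsf{GCC\_I}, runs in $O(n^2)$, and the whole algorithm runs in $O(n)\cdot O(n^2)=O(n^3)$.

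The step I expect to be the real obstacle is precisely this amortisation of the \textsf{while} loop. The crude bound — a clique test is $O(n^2)$ and the loop may run $\Omega(n)$ times — yields only $O(n^3)$ per \textsf{Find\_Simplicial} call and $O(n^4)$ overall, so the $O(n^3)$ claim genuinely relies on the independent-set structure of the chosen vertices together with the ``mark all common neighbours of $a$ and $b$'' rule collapsing the expensive high-degree cases: a would-be bad instance with many high-degree non-simplicial choices appears to be wiped out by that rule in a single iteration, and I would make that phenomenon precise. Failing that, one would fall back either to stating $O(n^4)$ or to replacing the explicit clique test by a cheaper certificate of simpliciality.
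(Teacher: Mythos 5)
You set the problem up exactly as the paper does: the outer repeat loop runs $O(n)$ times, the maximum-clique branch costs $O(n\log n)$ by the cited algorithms, and everything hinges on showing that one call to Find\_Simplicial costs $O(n^2)$, the incidence matrix and the degree sort being cheap. But at the one step you yourself flag as the obstacle there is a genuine gap, and the specific route you sketch does not work as stated: pairwise non-adjacency of the successive pivots $v_1,v_2,\dots$ does not give $\sum_j \deg(v_j)=O(n)$, because marked vertices are not removed from the graph --- they still contribute to the degree of, and are still probed in the clique test of, every later pivot --- so the closed neighbourhoods $N(v_j)$ can overlap in an arbitrarily large common set of already-marked vertices even though the $v_j$ themselves form an independent set. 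Ruling such configurations out requires invoking the common-neighbour-marking rule, which is exactly the argument you defer.

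The paper closes this step with a different (and admittedly terse) accounting: it charges the work of the while loop to entries of the incidence matrix $I$ and claims each entry is accessed $O(1)$ times. The mechanism is that a failed clique test on $N(v)$ stops at the first non-edge $(\rho,\rho')$ it probes, and the algorithm then marks every common neighbour of $\rho$ and $\rho'$; hence no later pivot's closed neighbourhood contains both $\rho$ and $\rho'$, and $I[\rho,\rho']$ is never probed by a clique test again. Each failure marks at least one new vertex, so there are at most $n$ failures per call, each with an $O(n)$ marking pass; together with the charging of probes to matrix entries this gives $O(n^2)$ per call and $O(n^3)$ overall. If you repair your write-up, charge probes to entries of $I$ rather than summing $\deg(v_j)^2$ over pivots --- though note that even the paper's charging pins down only the terminating probe of each failed test, while the edge probes made before the non-edge is found are not obviously assigned to distinct entries, so ``each entry is accessed $O(1)$ times'' is itself asserted rather than fully proved there. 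Your fallback of $O(n^4)$ is what one gets with no amortisation at all; the paper does not retreat to it.
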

\begin{proof}
Each iteration of the repeat loop of the algorithm consists of two steps: (i) 
finding a simplicial rectangle, and (ii) finding the largest clique. 

Let $N(R)$ denote the neighbors of $R$ including itself in the graph $G({\cal
R})$. In Step (i), the construction of the incidence matrix $I$ needs $O(n^2)$
time. While searching for a simplicial rectangle, in each failure step it marks
all the members of $N(R)$ since there exists no other rectangle $R' \in N(R)$
which is simplicial. The proof is as follows. If $N(R)=N(R')$ then surely $R'$
is not simplicial. Otherwise there exists some rectangle $R''$ such that $R''
\in N(R')$ but $R'' \not\in N(R)$; since $R \in N(R')$, $R'$ is not
simplicial. Moreover, if $\rho,\rho'\in N(R)$ and $I(\rho,\rho')=0$, then we
have deleted all the rectangles having both $\rho$ and $\rho'$ as neighbors.
Thus, each entry of the matrix $I$ is accessed $O(1)$ time. 

Step  (ii) needs $O(n\log n)$ time in the worst case \cite{IA83,NB}. Since the 
number of iterations is $O(n)$ in the worst case, the result follows. 
\end{proof}
\vspace{-0.1in}
\subsection{MIS problem}  \label{MIS}
\subsubsection{Greedy algorithm} \label{MISG}
Our MIS heuristic also depends on the concept of simplicial rectangle. If a 
simplicial rectangle $R$ is found in $G(\hat{\cal R})$ (i.e., $N(R)$ forms
a clique), we can only choose $R$ in the independent set among the set of
rectangles $N(R)$. Our algorithm is an iterative one. At each iteration, it
searches for a simplicial rectangle. If such a rectangle $R$ is found, it is
included in MIS; otherwise, we delete a rectangle having maximum number of
neighbors. The logic behind choosing such a rectangle is that its absence may
delete a neighbor of maximum number of rectangles. Note that, this may also be a
simplicial rectangle if some of its adjacent rectangle is removed. In Figure
\ref{fig1} such a situation is demonstrated. Here none of the rectangles present
in the region is simplicial due to the position of the other rectangles.
Rectangle $A$ has maximum number of neighbors. But removal of rectangle $B$
makes it simplicial. However, the chance of such a rectangle to be simplicial is
small. The pseudo code of the algorithm is given in Algorithm \ref{alg:MIS}.
\begin{figure}[t]
\centering
\includegraphics[scale=0.4]{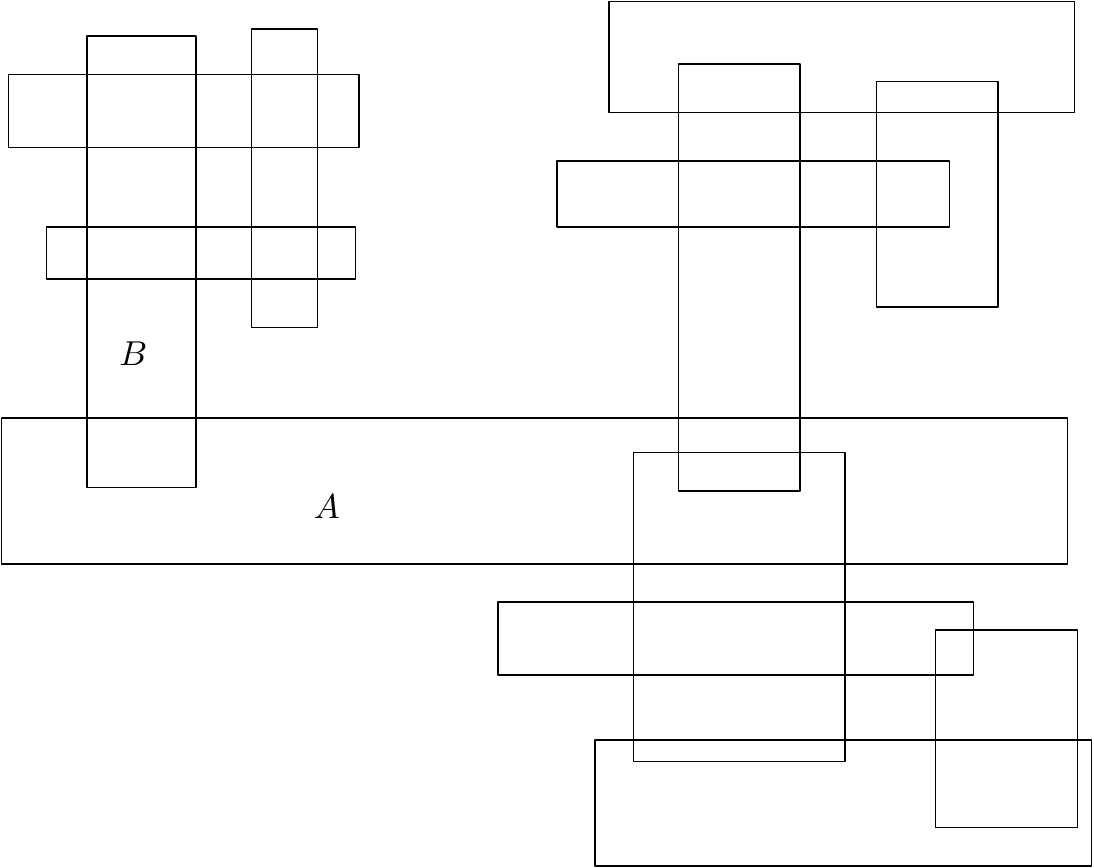} 
\caption{The maximum degree rectangle removal may lead to non-optimal result}
\label{fig1}
\vspace{-0.1in}
\end{figure}

\begin{algorithm}[h!]
\caption{Algorithm MIS$({\cal R})$}
\small
\begin{algorithmic}[1]
\STATE {\bf Input:} A set $\cal R$ of $n$ randomly generated rectangles on a 2D
plane.
\STATE {\bf Output:} A set $MIS$ of mutually non-overlapping rectangles.

\STATE Compute $\hat{\cal R}$, the non-dominated set of rectangles in $\cal R$.
\STATE Initialize $MIS =\emptyset$.
\REPEAT
\STATE $R$ = Find\_Simplicial($\hat{\cal R}$).
\IF{$R \neq 0$}
\STATE Let $N(R)$ be the set of rectangles adjacent to $R$ including itself.  
\STATE Set $\hat{\cal R}= \hat{\cal R} \setminus N(R)$ and $MIS = MIS
\cup \{R\}$.
\ELSE
\STATE Let $R'$ be the rectangle having maximum degree.
\STATE Set $\hat{\cal R}= \hat{\cal R} \setminus \{R'\}$
\ENDIF
\UNTIL {$\hat{\cal R} =\emptyset$}
\STATE Return $MIS$ along with its cardinality.
\end{algorithmic}
\label{alg:MIS}
\normalsize
\end{algorithm}
\vspace{-0.1in}
\begin{lemma}
The worst case time complexity of the algorithm MIS is $O(n^3)$.
\end{lemma}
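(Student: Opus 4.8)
The plan is to bound the cost of one iteration of the \textbf{repeat} loop and then bound the number of iterations, exactly as in the proof of Lemma~\ref{l2}. First I would dispose of the preprocessing: computing the non-dominated set $\hat{\cal R}$ (Step~3) can be done by a sweep over the $2n$ corner coordinates, costing $O(n\log n)$, which is dominated by what follows and need not be revisited.

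Next I would analyze a single pass through the loop body. The only non-trivial work is the call \texttt{Find\_Simplicial}$(\hat{\cal R})$. Here I would simply reuse the analysis already carried out in the proof of Lemma~\ref{l2}: building the incidence matrix $I$ of $G(\hat{\cal R})$ takes $O(n^2)$; sorting the vertices by degree takes $O(n\log n)$; and in the \textbf{while} loop each failure either marks all of $N(v)$ (so no other vertex of $N(v)$ is re-examined as a candidate) or, for a non-adjacent pair $a,b\in N(v)$, marks all common neighbors of $a$ and $b$, so that the corresponding entries of $I$ are never consulted again. Hence each entry of $I$ is touched $O(1)$ times and the whole procedure runs in $O(n^2)$. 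The \textbf{else} branch — picking the maximum-degree rectangle $R'$ — is free up to $O(n^2)$ since the degrees are already available from $I$ (or may be recomputed in $O(n^2)$), and updating $\hat{\cal R}$ by deleting $N(R)$ or $\{R'\}$ costs $O(n)$. Thus one iteration costs $O(n^2)$.

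Finally I would bound the iteration count. In the \textbf{if} branch the set $N(R)$ removed from $\hat{\cal R}$ contains at least the rectangle $R$ itself, and in the \textbf{else} branch exactly one rectangle $R'$ is removed; in either case $|\hat{\cal R}|$ strictly decreases, so the loop terminates after at most $n$ iterations. Multiplying $O(n)$ iterations by the $O(n^2)$ per-iteration cost gives the claimed $O(n^3)$ bound.

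The only point that requires care — and it is the part I expect to be the main obstacle to state cleanly — is the per-call $O(n^2)$ bound on \texttt{Find\_Simplicial} when it \emph{fails} to find a simplicial rectangle: one must argue that the marking rules are aggressive enough that no matrix entry is inspected more than a constant number of times, so that repeated failures do not accumulate to more than $O(n^2)$ total within a single invocation. Since this is precisely the argument already given for Lemma~\ref{l2}, I would invoke it verbatim rather than reprove it, and the remaining bookkeeping (preprocessing cost, iteration count, the trivial \textbf{else} branch) is routine.
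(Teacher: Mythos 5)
Your proposal is correct and follows essentially the same route as the paper: the paper's proof likewise decomposes each iteration into (i) the call to Find\_Simplicial, charged $O(n^2)$ by invoking the analysis of Lemma~\ref{l2}, and (ii) the trivial maximum-degree selection, and then multiplies by the $O(n)$ iteration bound. You simply spell out the bookkeeping (preprocessing, deletion costs, strict decrease of $|\hat{\cal R}|$) that the paper leaves implicit.
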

\vspace{-0.1in}
\begin{proof}
Each iteration of the repeat loop of the algorithm consists of two steps: 
(i) finding a simplicial rectangle, and (ii) finding the rectangle having
maximum degree.  The result follows from the time complexity of Step (i) (see
Lemma \ref{l2}). 
\end{proof}
\vspace{-0.1in}
\subsubsection{A variation of the Greedy algorithm}
In our greedy algorithm (subsection \ref{MISG}), at each iteration we searched
for a simplicial rectangle. If such a rectangle is not found, then we removed
the rectangle having maximum degree, and repeated the process. 

The algorithm proposed in this section is very similar to $GCC\_I$ algorithm.
Here instead of removing the rectangle having maximum degree, we identified the
largest clique and then removed all the rectangles participating in that clique.
Finally, we report the set simplicial rectangles as the independent set.
The time complexity of this algorithm remains same as that of $GCC\_I$.

\vspace{-0.1in}
\section{Experimental Studies}
\vspace{-0.1in}
We have performed a detailed experiment with different  $n$ (number of
rectangles). For each $n$, we generated 20 different instances of $n$ random
rectangles as described in Section \ref{alg}. For each instance, we run our
proposed heuristics for both MCC and MIS problem, and also the divide and
conquer heuristics of \cite{Neilson00} for the MCC problem. In Table
\ref{T} and Figure \ref{fig:GCC}, we refer this algorithm as DCC. Figure
\ref{fig:GCC} shows the comparison of performance of our proposed two greedy
heuristics $GCC$ and $GCC\_I$ and the divide and conquer algorithm DCC of
\cite{Neilson00} for the MCC problem on rectangle intersection graph. It is
observed that our both the algorithms produce result better that of
\cite{Neilson00}. We have also plotted $3\sqrt{n}$ for different values of $n$
in the same graph to demonstrate the solution produced by the improved greedy
heuristic $GCC\_I$ is always less than $3\sqrt{n}$. 

\begin{figure}[ht]
\centering
\includegraphics[scale=0.375]{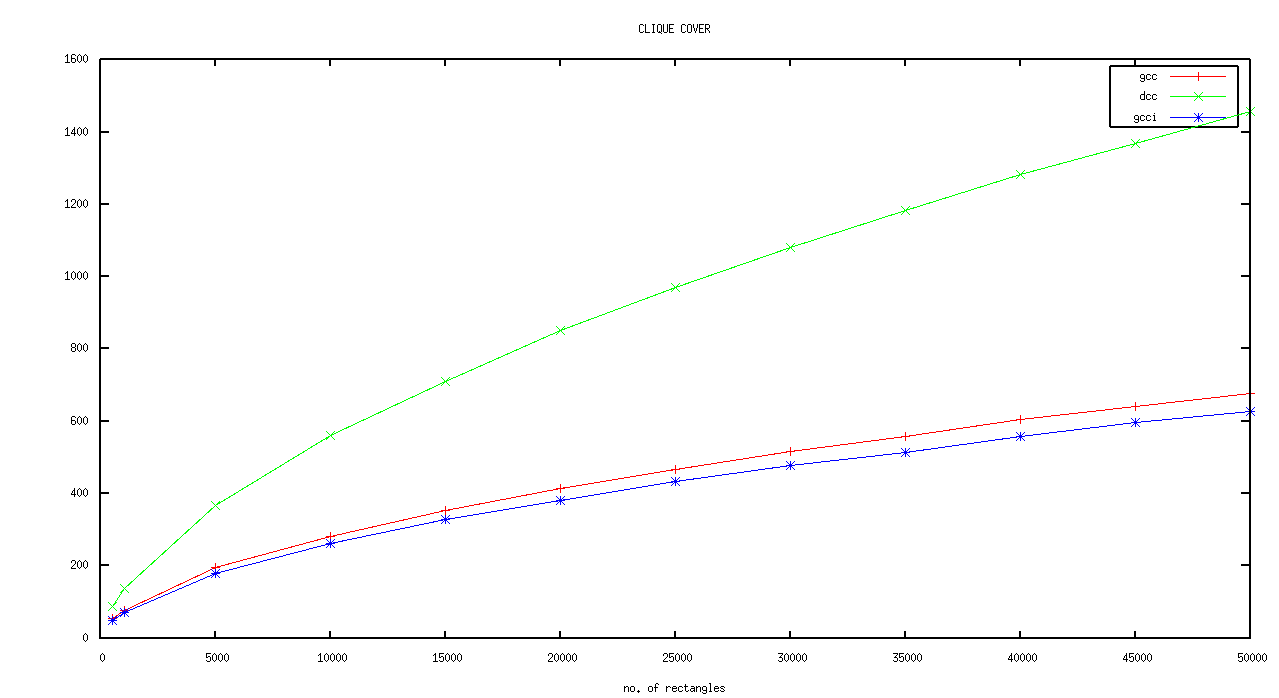} 
\caption{Comparison of the quality of solution produced by our proposed 
heuristics $GCC$, $GCC\_I$ and the DCC algorithm of 
\cite{Neilson00} for the clique cover
of  rectangle intersection graph}\label{fig:GCC}
\end{figure}

\begin{figure}[htbp]
    \centering
    \includegraphics[scale=0.6]{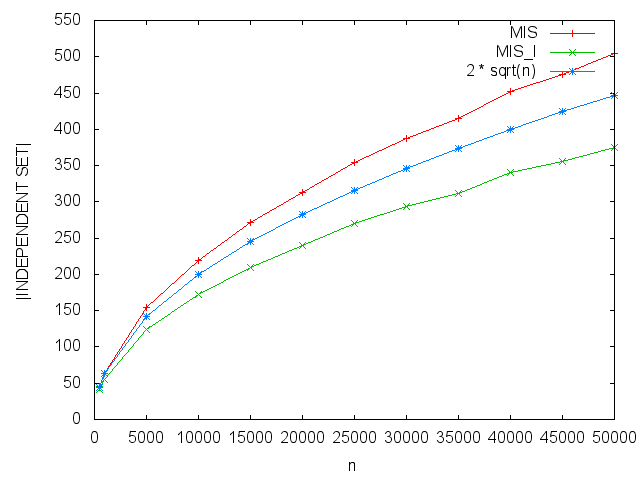}\vspace{-0.1in}
\caption{Comparison of the results produced by our proposed heuristics
$MIS$ and $MIS\_I$ for the maximum independent set of a rectangle
intersection graph}
\label{fig:MIS}
\end{figure}

In Figure \ref{fig:MIS}, we demonstrate the performance of our proposed two
greedy heuristics $MIS$ and $MIS\_I$ for the maximum independent set problem on
rectangle intersection graph. We have also plotted $2\sqrt{n}$ for different
values of $n$ in the same graph to demonstrate the solution produced by the
greedy heuristic $MIS$ is always less than $2\sqrt{n}$. But, the running time of
our $MIS\_I$ heuristic is much better than that of $MIS$.

\begin{table*}
\caption{Experimental result}
\small
\begin{center}
\begin{tabular}{|c|c|c|c|c|c|c|}\hline
$n$ & \multicolumn{3}{c|}{min. clique cover}  & \multicolumn{2}{c|}{max. indep.
set}  & \\  \cline{2-6}  
& DCC\cite{Neilson00} & GCC & GCC\_I & MIS & MIS\_I &  $GCC\_I/MIS$ \\  \hline
500     & 87   & 53  & 48  & 46  & 41  & 1.0434\\
1000    & 137  & 77  & 71  & 64  & 55  & 1.1094\\
5000    & 367  & 195 & 180 & 155 & 124  & 1.1613\\
10000  & 560  & 281 & 262 & 219 & 172  & 1.1963\\
15000  & 709  & 353 & 327 & 271 & 210  & 1.2066\\
20000  & 851  & 414 & 382 & 313 & 240  & 1.2204\\
25000  & 971  & 467 & 432 & 354 & 270  & 1.2203\\
30000  & 1080 & 515 & 478 & 388 & 294  & 1.2320\\
35000  & 1183 & 557 & 514 & 415 & 311  & 1.2386\\
40000  & 1282 & 604 & 559 & 452 & 341  & 1.2367\\
45000  & 1369 & 642 & 596 & 475 & 356  & 1.2547\\
50000  & 1456 & 677 & 628 & 504 & 375  & 1.2460\\ \hline
\end{tabular}
\end{center}
\normalsize
\label{T}
\vspace{-0.2in}
\end{table*}

The final conclusion of our experimental study is summarized in Observation
\ref{ob-final}, and is demonstrated in Figure \ref{fig:approx}, The
justifications of getting such results are also explained.

\begin{observation} \label{ob-final}
The solution produced by our greedy heuristics for the minimum clique cover
(MCC) problem on an intersection graph of a set of randomly generated
axis-parallel rectangles is  at most $2 \times OPT_{MCC}$, where $OPT_{MCC}$ is
the size of the optimum solution of the same problem. 
\end{observation}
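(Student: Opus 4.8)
The plan is to obtain the bound by combining a single classical inequality with the two quantitative trends that the experiments establish. Write $OPT_{MIS}$ for the size of a maximum independent set of $G(\hat{\cal R})$. The classical inequality is $OPT_{MCC}\ge OPT_{MIS}$, valid for every graph: in any clique cover no single clique can contain two vertices of a fixed independent set $S$, so the cover has at least $|S|$ cliques. First I would record this inequality, and also note that working with the non-dominated family $\hat{\cal R}$ instead of $\cal R$ is harmless: a dominated rectangle $R_i$ contains some $R'$, and replacing $R_i$ by a containment-minimal rectangle inside it does not shrink an independent set (its neighbourhood only gets smaller), while in the stabbing formulation any point hitting that minimal rectangle also hits $R_i$; hence neither $OPT_{MIS}$ nor $OPT_{MCC}$ changes, and it suffices to argue about $G(\hat{\cal R})$.

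Next I would feed in the experimental data. Algorithm~\ref{alg:MIS} always outputs a \emph{feasible} independent set, so its cardinality lower-bounds $OPT_{MIS}$; Table~\ref{T} and Figure~\ref{fig:MIS} exhibit this cardinality as at least $2\sqrt{n}$ on every tested instance, and as growing like $2\sqrt{n}$. Dually, Table~\ref{T} and Figure~\ref{fig:GCC} show that the clique covers produced by Algorithms~\ref{alg:GCC} and~\ref{alg:GCCI} have sizes $GCC$ and $GCC\_I$ bounded above by $3\sqrt{n}$. Chaining the two estimates with the classical inequality gives
\[
GCC\ \le\ 3\sqrt{n}\ =\ \tfrac{3}{2}\,(2\sqrt{n})\ \le\ \tfrac{3}{2}\,OPT_{MIS}\ \le\ \tfrac{3}{2}\,OPT_{MCC}\ \le\ 2\,OPT_{MCC},
\]
and the same chain applies verbatim to $GCC\_I$. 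Note that this yields the sharper constant $3/2$; equivalently, the last column of Table~\ref{T}, which reports $GCC\_I$ divided by the size of the greedy independent set, never exceeds $1.26$, so the stated $2\,OPT_{MCC}$ bound is amply satisfied and the argument is robust to small fluctuations of the random model.

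\textbf{Where the difficulty lies.} Essentially all of the content is in making the two empirical inputs rigorous. An honest proof must show, for the random rectangle model of Section~\ref{alg}, both that the greedy clique cover has size $O(\sqrt{n})$ with the correct leading constant --- this is in substance Tran's conjecture \cite{Tran11} --- and that the simplicial-rectangle greedy recovers an independent set of size $\Omega(\sqrt{n})$ with the correct constant. Each of these requires controlling, over the $\Theta(\sqrt{n})$ iterations of the respective loop, the joint distribution of the sizes of the successively removed maximum cliques (for the MCC side) and of the simplicial rectangles that are discovered (for the MIS side), and this probabilistic bookkeeping is where the real work sits. By contrast, the duality inequality and the reduction to $\hat{\cal R}$ are routine. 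Finally, one should treat small $n$ separately, since the ratio in the last column of Table~\ref{T} grows with $n$ and it is only the asymptotic regime that needs the $\sqrt{n}$ estimates; for the tested range the table itself certifies the $2\,OPT_{MCC}$ bound directly.
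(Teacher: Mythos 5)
Your argument is essentially the paper's own Justification~1: you combine the duality inequality $OPT_{MIS}\le OPT_{MCC}$ with the empirically observed bounds (greedy independent set $\ge 2\sqrt{n}$, greedy cover $\le 3\sqrt{n}$, i.e.\ the ratio $|GCC\_I|/|MIS|\le 1.5<2$ from the last column of Table~\ref{T}), and you correctly note that the only non-empirical ingredients are this duality and the harmless reduction to $\hat{\cal R}$. The paper also offers a second, structurally different heuristic justification via the counts $\Phi$ (simplicial rectangles found, which form an independent set and hence force $OPT_{MCC}\ge\Phi$) and $\Psi<\Phi$ (extra clique removals), which you do not use, but your route coincides with the paper's primary one and is, if anything, stated more carefully.
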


\begin{figure}[ht]
\centering
\includegraphics[scale=0.6]{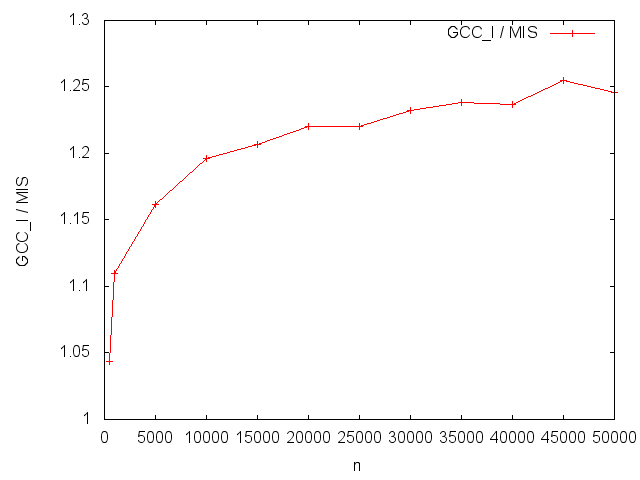} 
\caption{Justification of Observation \ref{ob-final}} 
\label{fig:approx}
\end{figure}

{\bf Justification 1:} Let $OPT_{MIS}$ be the size of the optimum solution of
the maximum independent set (MIS) problem. We have $\frac{OPT_{MIS}}{OPT_{MCC}}
\leq 1$. Figure \ref{fig:approx} shows that $\frac{|GCC\_I|}{|MIS|} \leq
1.5 < 2$ for all the chosen values  of $n$, where $|GCC\_I|$ and $|MIS|$
are the size of the solution generated by our greedy heuristics $GCC\_I$ and
$MIS$ for the MCC and MIS problems respectively. Again $|OPT_{MCC}| \leq |CC|$
for any arbitrary clique cover $CC$ and $|OPT_{MIS}| \geq |IS|$ for any
arbitrary independent set $IS$ of the given graph. Thus we have \\

\centerline{$OPT_{MIS} \leq OPT_{MCC} < |CC| < 2 \times |IS| \leq 2
\times OPT_{MIS}$.} 

\vspace{0.1in}
{\bf Justification 2:} In our experiment it is observed that 
$\frac{|MIS\_I|}{|GCC\_I|} < 1$ for all the values of $n$ we have chosen.
During the execution of $GCC\_I$ we observed two parameters $\Phi$ and $\Psi$,
where $\Phi$ indicates the number of simplicial rectangles observed, and $\Psi$
indicates that the number of times we need to eliminate the largest clique
without getting a simplicial rectangle. It is sure that the simplicial
rectangles need to be stabbed by a point; so $\Phi$ many points are essential.
$\Psi$ is the set of extra points used to stab the rectangles that are not 
stabbed by any point in $\Phi$, and $\Psi$ is less than $\Phi$. So, this gives
an indication towards 2 factor approximation algorithm for the MIS problem on
the randomly generated rectangle intersection graph. The indication could be
well justified if after the elimination of each clique corresponding to a
member in $\Psi$, we could get a simplicial rectangle. But the execution trace
(which is not included in this note) does not demonstrate this fact.\\

{\bf Justification 3:} It is noticed that since the solution produced by both 
GCC and GCC\_I algorithms for the clique cover problem is very close to 
$2\sqrt{n}$. It is proved in \cite{Neilson00} that the size of the optimum 
solution of the MCC problem is upper bounded by $\sqrt{n}$ for a set of $n$ 
randomly positioned rectangles. Thus, the empirical evidences show that our 
algorithm produces 2 approximation result for large values of $n$. 

\vspace{-0.1in}
\section{Conclusion}
\vspace{-0.1in}
In this note, we experimentally analyze the performance of greedy algorithm for
the minimum clique cover and maximum independent sets problems for rectangle
intersection graphs. Exerimental result shows that it produces 1.5 factor
approximation on the randomly generated instances of the corresponding problems.
The intuitive justifications of such behavior may lead to a formal algorithm of
getting a constant factor approximation results of the corresponding problems
for random instances of rectangle intersection graph. 

\small
\bibliographystyle{abbrv}
\bibliography{BibTex}

\end{document}